\documentclass[letterpaper, 10 pt, conference]{ieeeconf}  

\IEEEoverridecommandlockouts                             
\overrideIEEEmargins                                  

\makeatletter
\let\NAT@parse\undefined
\makeatother
\usepackage[square, sort&compress, comma, numbers]{natbib}

\usepackage{graphicx}
\usepackage{subfigure}
\graphicspath{{figures/}{./}} 
\setcounter{tocdepth}{3}\setcounter{secnumdepth}{3}

\usepackage[latin1]{inputenc}
\usepackage{fancyhdr}    
\usepackage{textcomp}  
\usepackage{amsmath} 
\usepackage{amssymb}
\usepackage{dsfont}
\usepackage{psfrag}
\usepackage{stfloats}
\usepackage{color}
\usepackage{multicol}
\usepackage{algorithmic}
\usepackage{algorithm}
\usepackage{array}
\usepackage{bm}
\usepackage{paralist}{}
\usepackage{wasysym}

\usepackage[bookmarks=true]{hyperref}
\usepackage[hyphenbreaks]{breakurl} 

\hyphenation{in-dis-tin-gui-sha-bi-li-ty}


\usepackage{amsthm}

\newtheorem{proposition}{Proposition}
\newtheorem{remark}{Remark}
\newtheorem{problem}{Problem}
\newtheorem{definition}{Definition}

\newcommand{\mdots}{\ifmmode\mathinner{\ldotp\kern-0.2em\ldotp\kern-0.2em\ldotp}\else.\kern-0.13em.\kern-0.13em.\fi}
\newcommand{\btriangle}{\textrm{\DOWNarrow}}
\newcommand{\wtriangle}{\triangledown}
\newcommand{\bsquare}{\blacksquare}

\newcommand{\bcircle}{\bullet}
\newcommand{\wcircle}{\circ}

\begin{document}

\title{Privacy-Preserving Vehicle Assignment for Mobility-on-Demand Systems}

\author{Amanda Prorok and Vijay Kumar
\thanks{We gratefully acknowledge the support of NSF grant CNS-1521617, and TerraSwarm, one of six centers of STARnet, a Semiconductor Research Corporation program sponsored by MARCO and DARPA. All authors are with the GRASP Laboratory at the University of Pennsylvania, Philadelphia, PA, 19104, USA.
    {\tt\small \{prorok|kumar\}@seas.upenn.edu}}}
\maketitle

\begin{abstract}
Urban transportation is being transformed by mobility-on-demand (MoD) systems. One of the goals of MoD systems is to provide personalized transportation services to passengers. This process is facilitated by a centralized operator that coordinates the assignment of vehicles to individual passengers, based on location data. However, current approaches assume that accurate positioning information for passengers and vehicles is readily available. This assumption raises privacy concerns. In this work, we address this issue by proposing a method that protects passengers' drop-off locations (i.e., their travel destinations). Formally, we solve a batch assignment problem that routes vehicles at obfuscated origin locations to passenger locations (since origin locations correspond to previous drop-off locations), such that the mean waiting time is minimized. Our main contributions are two-fold. First, we formalize the notion of privacy for continuous vehicle-to-passenger assignment in MoD systems, and integrate a privacy mechanism that provides formal guarantees. Second, we present a scalable algorithm that takes advantage of superfluous (idle) vehicles in the system, combining multiple iterations of the Hungarian algorithm to allocate a redundant number of vehicles to a single passenger. As a result, we are able to reduce the performance deterioration induced by the privacy mechanism.
We evaluate our methods on a real, large-scale data set consisting of over 11 million taxi rides (specifying vehicle availability and passenger requests), recorded over a month's duration, in the area of Manhattan, New York. 
Our work demonstrates that privacy can be integrated into MoD systems without incurring a significant loss of performance, and moreover, that this loss can be further minimized at the cost of deploying additional (redundant) vehicles into the fleet.
\end{abstract}



\section{Introduction}
The availability of Location-Based Services (LBS) is transforming a wide variety of applications. This development is being fueled by the increasing use of personal mobile communication devices (smart phones) that are endowed with positioning sensors, such as GPS. Importantly, the availability of precise positioning information in dense urban settings, and the joint decrease in communication costs, has paved the way for mobility-on-demand systems (MoD), such as Lyft~\footnote{http://www.lyft.com/} and Uber~\footnote{http://www.uber.com/}.
The potential of improved urban mobility systems has been largely acknowledged due to the possibility of reducing congestion, vehicle service cost and emissions~\cite{Santi:2014qbv}. Importantly, such services also respond to the needs of individuals, for example by reducing travel cost (through vehicle-sharing) and reducing waiting times (through centralized vehicle coordination)~\cite{Alonso-Mora:2017}.

However, the use of LBS to facilitate MoD services poses a privacy threat to the individual participants. Indeed, vehicles reporting the exact coordinates of a user's drop-off location (travel destination) may reveal sensitive information about the user's habits, and hence, may deter users from using such systems. Consequently, we ask ourselves what were to happen if vehicle locations were not reported \emph{precisely}, but rather \emph{imprecisely}. Indeed, by perturbing the vehicle locations, it is expected that the user will enjoy greater privacy --- at the cost of a loss of service quality. 
Hence, our goal is to propose a solution that protects user travel destinations, thus ensuring privacy, while simultaneously minimizing the loss of MoD service quality.
 
\begin{figure}[t]
\centering
\includegraphics[width=0.8\columnwidth]{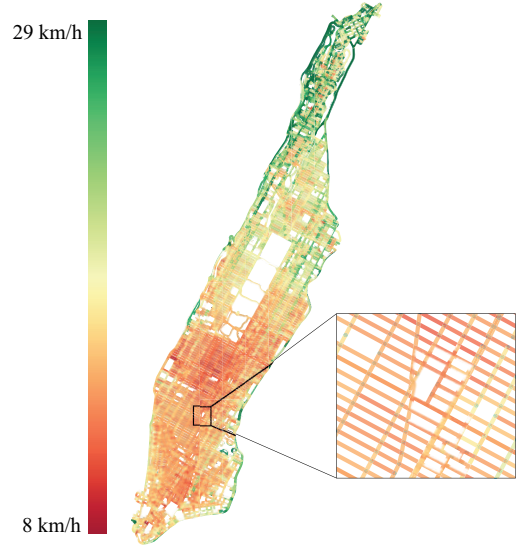}
\caption{Topological representation of Manhattan as computed by our framework described in Sec.~\ref{sec:data}. The edges of the graph are colored to represent the expected traversal speed. We zoom into the area around the Flatiron building, located at 40\textdegree 44' 27.8196'' N 73\textdegree 59' 22.9164'' W.
\label{fig:manhattan}}
\end{figure}

In this work, we consider a fleet of vehicles and passengers demanding to be picked up at specific locations. We pose this problem as a batch assignment of vehicles to passengers, similar to the approach taken in~\cite{Alonso-Mora:2017}. This assignment is facilitated by a centralized operator that collects all customer requests, i.e. the locations at which a vehicle is requested. 
Once a passenger is assigned a vehicle, she communicates her travel destination to her vehicle (by-passing the central operator). Upon completion of the passenger transport, the vehicle immediately communicates its availability to the central operator and specifies its current location. Since doing so would compromise the passenger's travel destination (i.e., the current vehicle position is equal to the dropped-off passenger's destination), we develop an assignment strategy that deals with obfuscated vehicle origin locations.

Although the origin of privacy research stems from the domains of database theory and statistics~\cite{Dwork:2008hs,adam1989security}, it has matured to the point of gaining significant traction in many cross-disciplinary domains, including social networking~\cite{strater2008strategies}, the Internet of Things~\cite{weber2010internet}, and robotics~\cite{Prorok:2016vj}. However, we are unaware of its application to the MoD problem. Indeed, most literature in the domain of MoD systems focuses on the questions of load re-balancing and predictive positioning~\cite{miao2016data, pavone2012robotic,spieser2016shared,miller2016predictive}, and vehicle assignment with passenger pooling~\cite{Santi:2014qbv,Alonso-Mora:2017}. 
 
The contributions of our work are summarized as follows. 
First, we formalize the notion of privacy for continuous vehicle-to-passenger assignment in MoD systems. The key insight is that vehicle origin locations correspond to previous passenger drop-off locations.
Building on prior work in location privacy, we combine the notion of geo-indistinguishability with the batch vehicle assignment problem to ensure private vehicle origin locations. Based on this framework, we quantify the effect of privacy on the performance of the system, as measured by mean passenger waiting times.
Subsequently, we present an algorithm that takes advantage of superfluous vehicles in the system and effectively reduces the performance deterioration induced by the privacy mechanism. 
Our methods are evaluated on a real, large-scale data set consisting of over 11 million taxi rides (specifying vehicle availability and passenger requests), recorded in the area of Manhattan, New York. Our main insight is that the loss of performance induced by privacy is small, and furthermore, this loss can be minimized at the cost of deploying additional vehicles.

\section{Problem Statement}
We consider a batch that consists of $M$ passengers, each requesting one vehicle, and $N$ available vehicles. We model the transport network via a weighted directed graph, $\mathcal{G}= (\mathcal{V},\mathcal{E}, \mathcal{W})$. Vertices in the set $\mathcal{V}$ represent geographic locations, where a node $i$ has a position $\mathbf{x}_i \in \mathbb{R}^2$. 
Nodes $i$ and $j$ are connected by an edge if $(i, j) \in \mathcal{E}$. A weight $w_{ij} \in \mathcal{W}$ quantifies the cost of traversing this edge --- we assume this cost to be equal to the time needed to reach node $j$ from node $i$.
We assume the graph $\mathcal{G}$ is a strongly connected graph, i.e., a path exists between any pair of vertices.
At the beginning of each assignment epoch, vehicles are located at nodes $\mathbf{v} \in \mathcal{V}^{N}$, and passengers are located at nodes $\mathbf{p} \in \mathcal{V}^{M}$. Hence, the positions of a vehicle $i$ and a passenger $j$ are given by $\mathbf{x}_{v_i}$ and $\mathbf{x}_{p_j}$, respectively. 
The vehicle-to-passenger assignment is denoted by a binary matrix $\mathbf{A} \in \{0,1\}^{N \times M}$, which is constrained by $\sum_{i}^N a_{ij} \leq 1$ and $\sum_{j}^{M} a_{ij} \leq 1$ and $\sum_i^{N}\sum_{j}^{M} a_{ij} = \textrm{min}(N,M)$. In other words, we assign $D=1$ vehicles to each passenger, and call this our \emph{non-redundant} scheme~\footnote{For a \emph{redundant} assignment with $D > 1$ (and where $N > M$), we have $\sum_i^N a_{ij} \leq 1$ and $\sum_j^M a_{ij} \geq 1$ and $M \leq \sum_i^N\sum_j^M a_{ij} \leq N$. The performance is then measured as the mean waiting time until pick-up by the fastest vehicle.}.
We capture the \emph{cost} for vehicle $i$ to travel to a passenger $j$ by a matrix $\mathbf{C} \in \mathbb{R}^{N \times M}$ with elements $c_{ij}$. Finally, we measure the performance of our assignment strategy by considering the \emph{waiting times} until pick-up, given by 
$c_{ij}$ where $a_{ij}=1$ for all passengers $j \in 1,\ldots,M$.

Once a vehicle has picked-up and transported a passenger to her desired location, the vehicle notifies the central operator that it is available by communicating its obfuscated position. This position corresponds to the vehicle's origin for the subsequent assignment epoch, with the true (non-obfuscated) value equal to the previous passenger's travel destination.
Our problem can now be stated as follows.
\begin{problem}\label{prob:private_origin} Design a method that routes vehicles to passengers, while minimizing average passenger waiting times, and while guaranteeing a desired level of privacy for passenger travel destinations. 
\end{problem}
%

\section{Manhattan Taxicab Dataset} 
\label{sec:data}
We focus the evaluations of our work on the geographical area of Manhattan, and rely on a public dataset of New York City yellow taxicab operation to provide us with real passenger demand and vehicle availability information~\footnote{NYC Taxi \& Limousine Commission, Trip Record Data, \url{http://www.nyc.gov/html/tlc/html/about/trip_record_data.shtml}}. 
The dataset was collected during the month of June, 2016, and consists of 11 million taxi rides. The data specifies the time and location of pick-up and drop-off, as well as trip distance and fare. 
In order to facilitate the evaluation of our methods, we create a graph of Manhattan by accessing actual street networks from OpenStreetMap~\footnote{\url{http://www.openstreetmap.org}}~\cite{Boeing:2016}. 
Our topological representation of Manhattan consists of 4302 nodes and 9414 edges. In order to deploy algorithms based on this representation, we first define the \emph{cost} of traversing any edge in this graph. In the context of transportation, an intuitive cost function is given by the expected travel time. Hence, we use the pick-up and drop-off locations listed in the June, 2016, dataset to compute all trajectories taken, assuming that the shortest path (length-wise) was chosen. We associate each trajectory with the listed travel time. Each edge of the trajectory is assigned a travel time proportional to its length. After processing all trajectories, we can compute the mean travel time $w_{ij}$ of each edge $(i,j)$ in the graph. 
Figure~\ref{fig:manhattan} shows the resulting expected travel times for all edges in the graph of Manhattan.

In order to solve the assignment problem, in the remainder of this work, we assemble passenger requests and vehicle availabilities into \emph{batches} that consider 20\,second time-windows. Figure~\ref{fig:manhattan_data} shows data collected on Friday June 1st, 2016. We process the ride data to show the number of new passenger requests per batch. The data shows how demand peaks during the morning and late afternoon rush hours, with fluctuations at lunch time.

\begin{figure}[tb]
\centering
\psfrag{t}[cc][][0.8]{Time}
\psfrag{o}[cc][][0.8][90]{Occupied Taxis}
\psfrag{r}[cc][][0.8][-90]{Number of Requests per Batch}
\includegraphics[width=0.8\columnwidth]{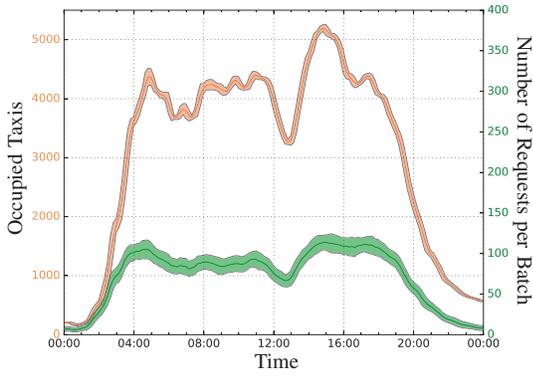}
\caption{Analysis of Manhattan taxicab dataset for data collected on Friday June 1st, 2016. From all rides recorded in that 24\,h time-interval, we only select rides that start and end on the island of Manhattan. 
We show the number of passenger pick-ups made per 20\,s intervals (green curve). We also show the total number of occupied taxis at any given moment (blue curve). The data is smoothed over 30\,min rolling windows, and the shaded areas show the corresponding standard deviations.
\label{fig:manhattan_data}}
\end{figure}

\section{Background} 
\label{sec:preliminaries}

In the following, we review the notions of location privacy upon which we build our methodology. 
Several approaches to location privacy have been proposed thus far --- a comprehensive review is offered in~\cite{Andres:2013geo,koufogiannis2016location}. Most of these methods, however, assume that the adversary's prior belief (side information) is known, and are explicitly modeled on this assumption~\cite{shokri2012protecting}. Such approaches have the downside that any inconsistency or change in the attacker's side information leads to an immediate threat (and privacy is no longer guaranteed).
Indeed, a much stronger definition of privacy is one that is independent of any current or future attacker model.
Consequently, there has been much interest in \emph{differentially private} formalisms that abstract from adversary's side information~\cite{Dwork:2008hs}.

\subsection{Differential Privacy}
\label{sec:diff_privacy}
Stemming from the domain of statistical databases, the goal of differential privacy is to protect individual entries in a given database (in our case, passenger drop-off locations), while simultaneously allowing aggregate information about the database to be released through a query (in our case, a query that outputs the vehicles' origin locations). The key requirement is that changing an individual's entry in the database (i.e., a vehicle origin location that corresponds to a specific passenger drop-off location) should not have a significant affect on the outcome of the query. More formally, if the probability that a query returns a value from a database lies within an $e^\epsilon$ multiplicative bound of the probability that the same query returns the same value from an adjacent database~\footnote{Two databases are \emph{adjacent} if they differ by one entry.}, then the query is said to produce $\epsilon$-indistinguishable outcomes~\cite{Dwork:2006wc}. Notably, this definition is void of any threat model, and hence, is independent of any side information that the attacker might own. In order to preserve $\epsilon$-indistinguishability, privacy mechanisms consist of adding random noise (commonly drawn from a Laplace distribution) to the query output. 

\subsection{Geo-Indistinguishability}
\label{sec:geo_privacy}
The location privacy formalism put forward by Andres et al.~\cite{Andres:2013geo}, termed geo-indistinguishability, is a generalization of differential privacy to the metric domain. In the following, we introduce the main concepts with an adapted notation.
Geo-indistinguishability considers a query that exposes a position $\mathbf{x}$ from a database. The privacy leakage can be formulated as 
\begin{equation}
\mathcal{L} = \sup_{\mathbf{x},\mathbf{x}'} ~  \left | \mathrm{ln} \frac{\mathbb{P}(\tilde{\mathbf{x}}|\mathbf{x})}{\mathbb{P}(\tilde{\mathbf{x}}|\mathbf{x}')} \right |
\end{equation}
where $\mathbf{x}$ is a true position stored in the original database, $\mathbf{x}'$ is the corresponding altered position stored in an adjacent database, and $\tilde{\mathbf{x}}$ is an obfuscated position.
The idea of geo-indistinguishability is to ensure that two positions $\mathbf{x}$ and $\mathbf{x}'$ are indistinguishable when they are close to each other. In other words, a user enjoys $\epsilon r$-privacy within a radius $r$, if any two locations that are at most $r$ apart produce query results with similar distributions.
\begin{definition}[Adapted from Def. 3.1~\cite{Andres:2013geo}: Geo-indistinguishability]
A mechanism that returns $\tilde{\mathbf{x}}$, for a given $\mathbf{x}$ or a given $\mathbf{x}'$, satisfies $\epsilon$-geo-indistinguishability iff for all $\mathbf{x}$ and $\mathbf{x}'$:
\begin{equation}
\mathcal{L} = \sup_{\mathbf{x},\mathbf{x}'} ~  \left | \mathrm{ln} \frac{\mathbb{P}(\tilde{\mathbf{x}}|\mathbf{x})}{\mathbb{P}(\tilde{\mathbf{x}}|\mathbf{x}')} \right | \leq \epsilon ||\mathbf{x}-\mathbf{x}'||_2  
\end{equation}
\end{definition}
Building on prior results~\cite{chatzikokolakis2013broadening}, the authors argue that
the obfuscated position $\tilde{\mathbf{x}}$ is to be drawn from a two-dimensional Laplace distribution inversely scaled by $\epsilon$, and centered at $\mathbf{x}$. Formally, we have that $\tilde{\mathbf{x}} \sim L(\mathbf{x}, \epsilon)$, and we define the corresponding probability density function as $\mathbb{P}_L(\tilde{\mathbf{x}} | \mathbf{x}, \epsilon)$.
In order to satisfy $\epsilon$-geo-indistinguishability, we implement this proposed privacy mechanism~\footnote{We note that Th. 4.1 of~\cite{Andres:2013geo} proves that under double precision with 16 significant digits, the discretization of noisy data points onto a grid does not incur a loss of privacy.}.

\newcommand{\fc}{0.33} 
\begin{figure*}[tb]
\centering
\subfigure[$\epsilon = 0.005$]{\includegraphics[width=\fc\columnwidth]{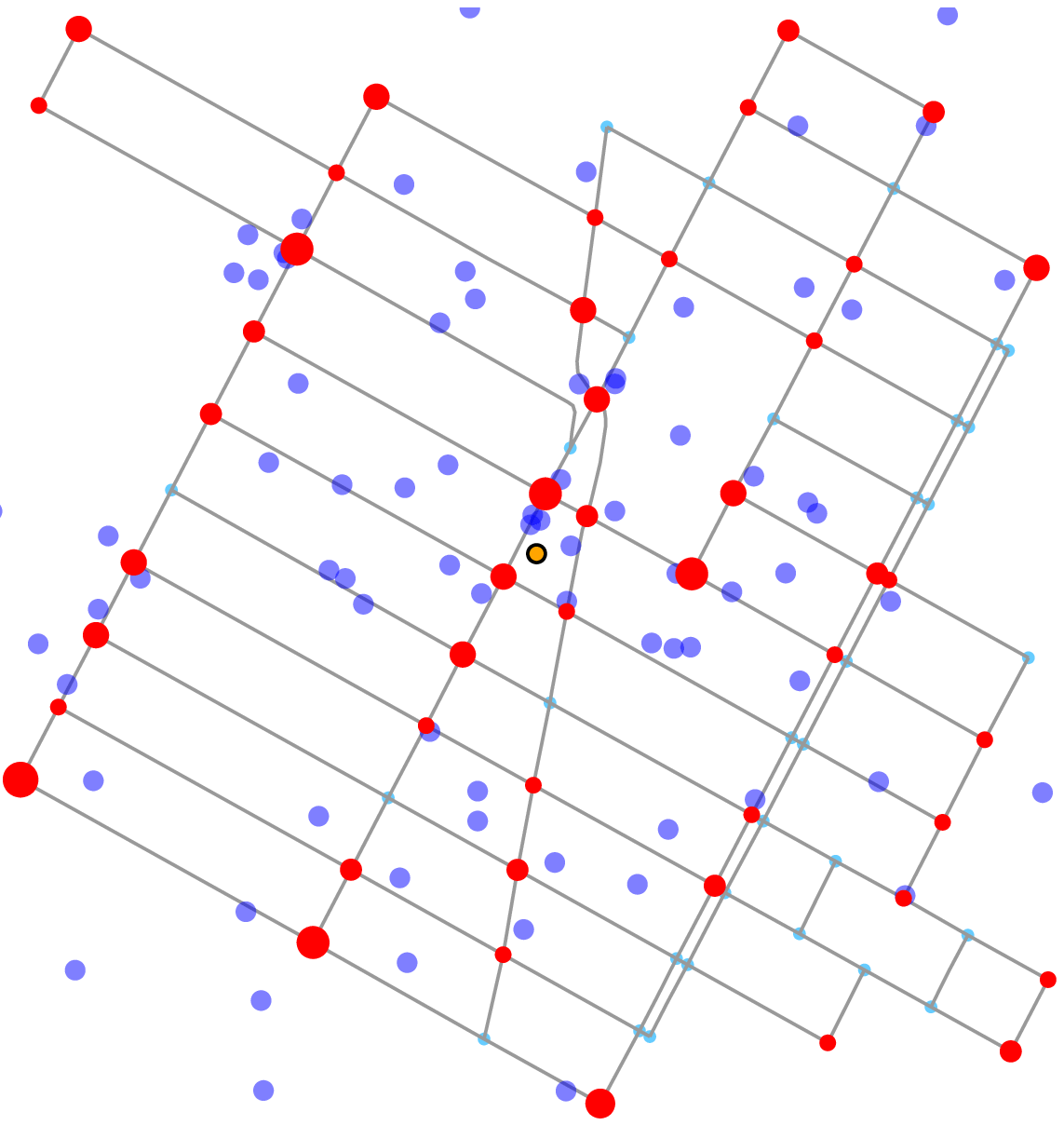}\label{fig:manhattan_noisy_a}}\hspace{0.2cm}
\subfigure[$\epsilon = 0.01$]{\includegraphics[width=\fc\columnwidth]{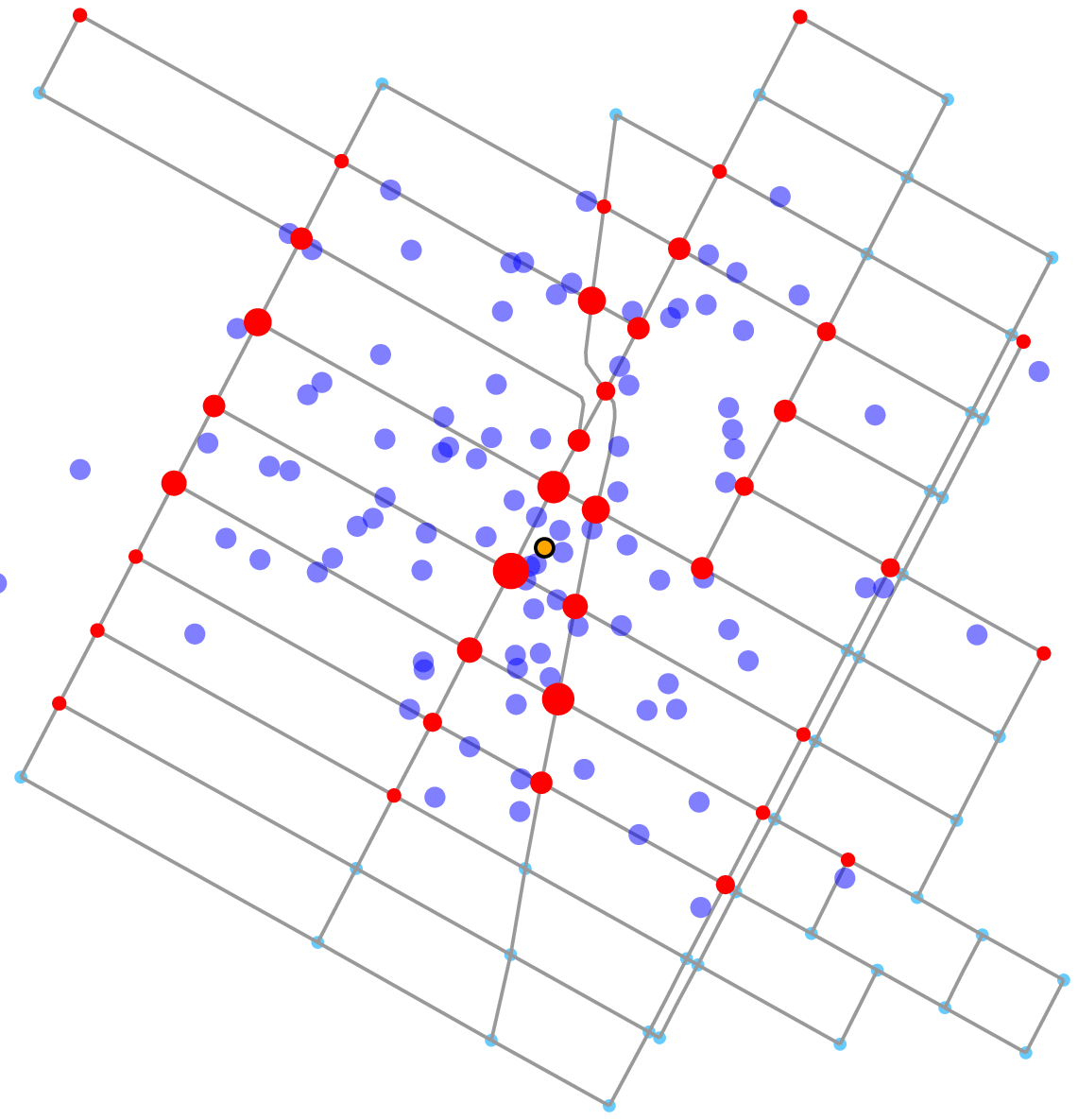}\label{fig:manhattan_noisy_b}}\hspace{0.2cm}
\subfigure[$\epsilon = 0.02$]{\includegraphics[width=\fc\columnwidth]{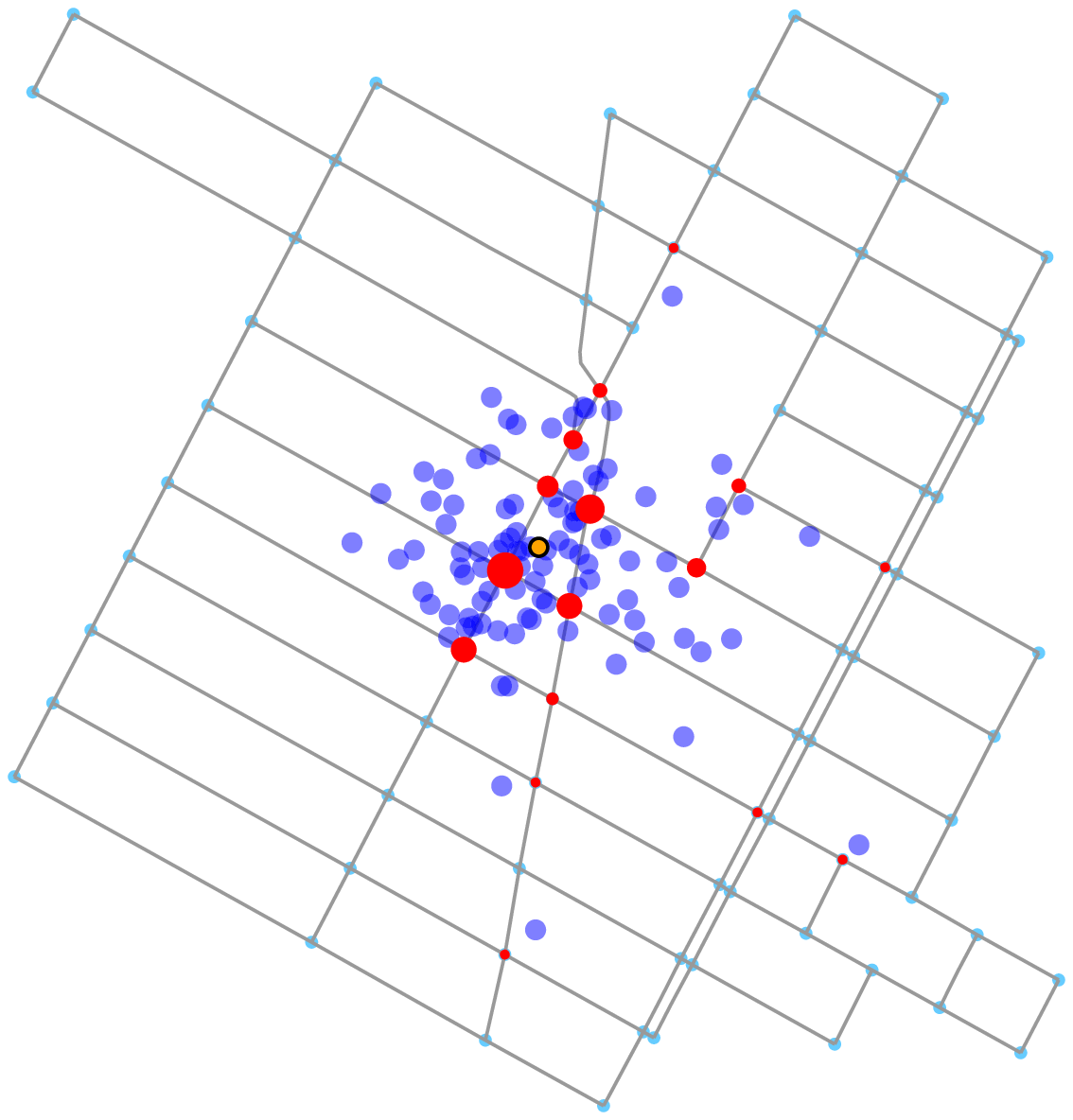}\label{fig:manhattan_noisy_c}}\hspace{0.2cm}
\subfigure[$\epsilon = 0.05$]{\includegraphics[width=\fc\columnwidth]{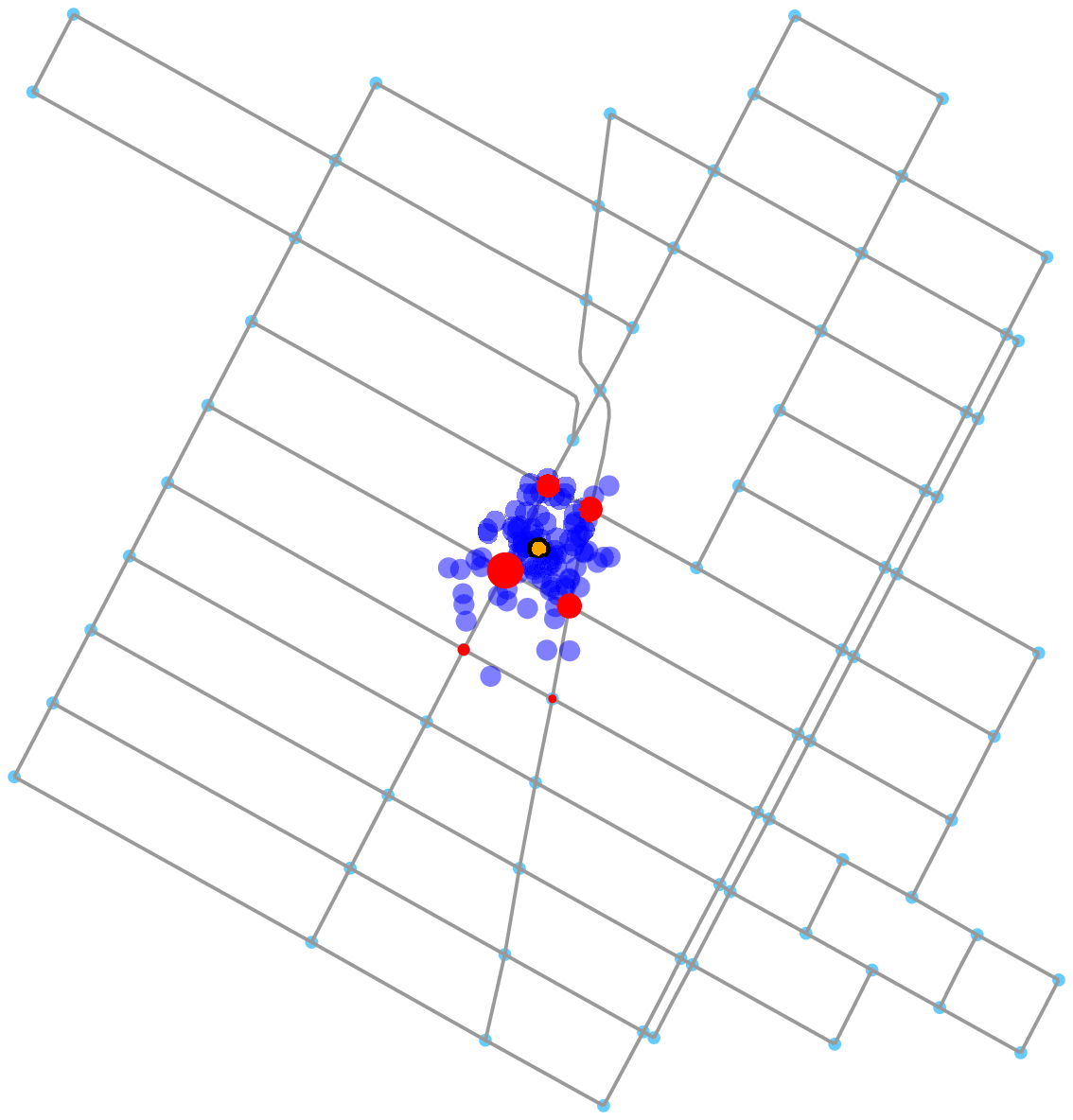}\label{fig:manhattan_noisy_d}}\hspace{0.2cm}
\subfigure[$\epsilon = 0.1$]{\includegraphics[width=\fc\columnwidth]{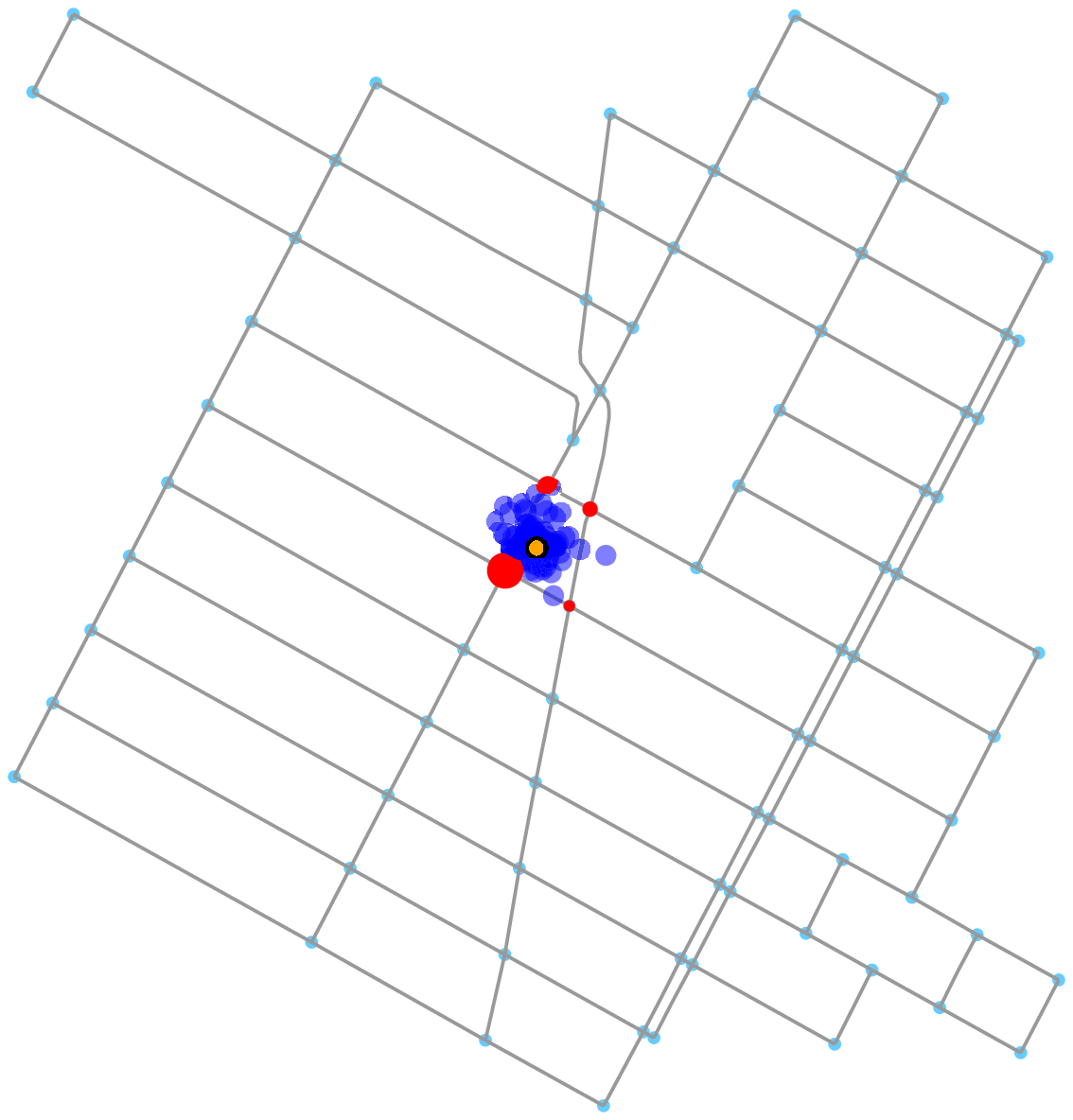}\label{fig:manhattan_noisy_e}}
\caption{A sub-area of Manhattan, centered around the Flatiron building located at 40\textdegree 44' 27.8196'' N 73\textdegree 59' 22.9164'' W. We draw 100 position samples from a two-dimensional Laplacian with inverse scale $\epsilon$, centered around the Flatiron building. The random samples are subsequently projected to the nearest vertices on the associated graph (shown in red), where the size of the node corresponds to the multiplicity of projected samples at that node.
\label{fig:manhattan_noisy}}
\end{figure*}

Fig.~\ref{fig:manhattan_noisy} demonstrates the effect of this mechanism, applied to the coordinates of the Flatiron building in Manhattan. 
We observe how, as the scale of the Laplacian increases (i.e., $\epsilon$ decreases), the noise (and hence privacy) increases. 
In the context of vehicle routing, it becomes clear that increased privacy comes at the cost of performance deterioration due to an obfuscation of vehicle positions that leads to suboptimal vehicle routing.
In the following sections, we discuss this effect and propose a method that enables a minimization of this loss of performance.


\section{Batch Vehicle Routing under Privacy} 
\label{sec:det_routing}

The goal is to assign and route vehicles to passengers such that each passenger is picked up, while minimizing the total assignment cost. We formalize this vehicle routing problem as finding the optimal assignment solution $\mathbf{A}^{\star}$:
\begin{eqnarray}
  \mathbf{A}^{\star} = {\mathop {\mathrm{argmin}} \limits_{\mathbf{A}} } && \sum_{i=1}^{N}\sum_{j=1}^{M} c_{ij} a_{ij} \label{eq:vrp} 
\end{eqnarray}
with constraints $\sum_{i}^N a_{ij} \leq 1$ and $\sum_{j}^{M} a_{ij} \leq 1$ and $\sum_i^{N}\sum_{j}^{M} a_{ij} = \textrm{min}(N,\allowbreak M)$. The element $a_{ij}^{\star}$ of matrix $\mathbf{A}^{\star}$ specifies whether the final solution routes vehicle $i$ to passenger $j$.

The system above is a linear sum assignment problem, also known as the problem of minimum weight matching in bipartite graphs. We use the Hungarian algorithm (or Kuhn-Munkres algorithm), to solve the system and find an optimal assignment $\mathbf{A}^{\star}$.
This assignment is deterministic, and vehicles follow the shortest path (or one of the shortest paths, if several exist) to reach their assigned passenger.

To compute the elements $c_{ij}$ of the cost matrix $\mathbf{C}$, we consider the cost incurred when routing a vehicle located at a node $i$ to a passenger located at a node $j$. The cost of this path is given by the sum of the weights of edges that lie on it
\begin{equation}
\label{eq:cost}
f(i,j) = \sum_{(k,l) \in \mathcal{S}_{ij}} w_{kl},
\end{equation}
where $\mathcal{S}_{ij}$ is the set of edges in the shortest path between node $i$ and node $j$, and $w_{kl} \in \mathcal{W}$ is the weight of an edge $(k,l)$.
We can now compute the cost for all possible vehicle-to-passenger assignments
\begin{equation}
c_{ij} = f(v_i, p_j),~ \forall i, j
\end{equation}
and subsequently solve system~\eqref{eq:vrp}.

\subsection{Solving the Assignment Problem under Obfuscation}

Our goal is to increase the privacy of vehicle origin locations (we remind the reader that the vehicle origin and the previous passenger drop-off locations are the same). We do this by implementing the privacy mechanism described in Sec.~\ref{sec:geo_privacy} to produce obfuscated (noisy) vehicle origin locations, denoted by $\mathbf{x}_{\tilde{v}_i}$ for all vehicles $i=1,\ldots,N$ --- i.e., $\mathbf{x}_{\tilde{v}_i} \sim L(\mathbf{x}_{v_i}, \epsilon)$. We compute the expected cost $\tilde{c}_{ij}$ of routing a vehicle from a probable node $v_i$ to a true passenger location ${p}_j$, given that the vehicle is located around a noisy position ${\mathbf{x}_{\tilde{v}_i}}$ generated by a planar Laplace distribution with inverse scale parameter $\epsilon$:
\begin{equation}
\label{eq:exp_cost}
\tilde{c}_{ij} = \mathbb{E}[c_{ij}] = \eta \sum_{k \in \mathcal{V}} \mathbb{P}_L(\mathbf{x}_{\tilde{v}_i} | k, \epsilon) f(k, p_j).
\end{equation}
where $\eta$ is a normalization constant. 

We adapt the original objective in~\eqref{eq:vrp} to account for the expected cost:

{\small
\begin{eqnarray} \label{eq:vrp_exp} 
  \mathbf{\tilde{A}}^{\star} = {\mathop {\mathrm{argmin}} \limits_{\mathbf{A}} } ~\mathbb{E} \left[ \sum_{i=1}^{M}\sum_{j=1}^{N} c_{ij} a_{ij} \right] = {\mathop {\mathrm{argmin}} \limits_{\mathbf{A}} } \sum_{i=1}^{M}\sum_{j=1}^{N} ~\mathbb{E}[ c_{ij}] a_{ij} .
\end{eqnarray}
}

We note that, since the cost values $\tilde{c}_{ij}$ are noisy, this assignment produces a \emph{suboptimal} assignment $\tilde{\mathbf{A}}^{\star}$ with respect to the true vehicle origin locations. We measure the performance of this assignment by considering the passenger waiting times $c_{ij}$ with $\tilde{a}^{\star}_{ij}=1$, where $c_{ij}$ corresponds to the effective waiting time (based on the true, non-obfuscated vehicle origins).

\begin{proposition}[$\epsilon$-geo-indistinguishable batch assignment]
\label{prop:1}
The batch assignment where each vehicle $i$ reports an obfuscated position $\mathbf{x}_{\tilde{v}_i}$ drawn from a planar Laplace distribution is $\epsilon$-geo-indistinguishable with respect to the true positions ${\mathbf{x}}_{v_i}$.
\end{proposition}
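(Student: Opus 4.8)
The plan is to split the claim into two standard reductions: first, that the vector of reported noisy positions is itself an $\epsilon$-geo-indistinguishable release of the true vehicle origins; and second, that the batch assignment is a post-processing of that release, so by the data-processing property of geo-indistinguishability it cannot increase the privacy leakage.

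For the first reduction, I would recall from Sec.~\ref{sec:geo_privacy} that each vehicle draws $\mathbf{x}_{\tilde{v}_i} \sim L(\mathbf{x}_{v_i},\epsilon)$, and that the planar Laplace mechanism underlying the definition of geo-indistinguishability (Andres et al.~\cite{Andres:2013geo}) guarantees, for any candidate true positions $\mathbf{x}$ and $\mathbf{x}'$,
\begin{equation}
\left| \ln \frac{\mathbb{P}_L(\tilde{\mathbf{x}} \mid \mathbf{x}, \epsilon)}{\mathbb{P}_L(\tilde{\mathbf{x}} \mid \mathbf{x}', \epsilon)} \right| \le \epsilon \, \|\mathbf{x} - \mathbf{x}'\|_2 .
\end{equation}
Since the noise terms are drawn independently across the $N$ vehicles, the joint density of the reported tuple factorizes. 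Viewing the tuple of true origins as the \emph{database} under the adjacency relation in which two databases differ in exactly one vehicle's origin (equivalently, one passenger's drop-off location), the likelihood ratio of the joint outputs has all factors corresponding to unchanged entries cancel, leaving only the factor for the perturbed vehicle $i$; its log is bounded by $\epsilon \|\mathbf{x}_{v_i} - \mathbf{x}'_{v_i}\|_2$, i.e. $\epsilon$ times the distance between the two adjacent databases. Hence the joint release of $(\mathbf{x}_{\tilde{v}_1},\ldots,\mathbf{x}_{\tilde{v}_N})$ is $\epsilon$-geo-indistinguishable.

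For the second reduction, I would note that $\tilde{\mathbf{A}}^{\star}$ is a deterministic function of the reported positions: the expected-cost matrix $\tilde{\mathbf{C}}$ is obtained from the $\mathbf{x}_{\tilde{v}_i}$ via~\eqref{eq:exp_cost} (the passenger locations $\mathbf{p}$ being public and thus outside the protected database), and $\tilde{\mathbf{A}}^{\star}$ is the output of the Hungarian algorithm applied to $\tilde{\mathbf{C}}$. By the post-processing immunity of (geo-)differential privacy --- composing any map with an $\epsilon$-geo-indistinguishable mechanism yields an $\epsilon$-geo-indistinguishable mechanism --- the assignment inherits the bound with respect to the true origins $\mathbf{x}_{v_i}$. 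The intermediate snapping of samples to graph vertices is likewise post-processing, and by Th.~4.1 of~\cite{Andres:2013geo} (already invoked in the footnote) does not itself leak, so it is harmless.

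The only non-routine point is keeping the constant at $\epsilon$ rather than $N\epsilon$: a careless sequential-composition argument over all vehicles would inflate the bound. The proof therefore hinges on (a) stating the adjacency relation precisely so that a single drop-off location changes, and (b) the independence of the per-vehicle noise, which together make the joint likelihood ratio collapse to a single factor (parallel composition over disjoint entries). A secondary subtlety to address is that the assignment also consumes the true passenger locations; one must argue these are side information held by the adversary and not part of the database being protected, so conditioning on them leaves the geo-indistinguishability bound --- which constrains only the output's dependence on the vehicle origins --- unchanged.
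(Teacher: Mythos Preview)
Your argument is correct, and its core step coincides with the paper's: both write the leakage for the joint release $(\mathbf{x}_{\tilde v_1},\ldots,\mathbf{x}_{\tilde v_N})$, use independence of the per-vehicle Laplace noise to cancel all but the single perturbed coordinate, and bound the remaining ratio by $\epsilon\|\mathbf{x}_{v_i}-\mathbf{x}_{v'_i}\|_2$. The paper stops there, treating ``the batch assignment'' as the mechanism that reports the obfuscated positions and leaving the privacy of the downstream objects ($\tilde{\mathbf C}$, $\tilde{\mathbf A}^\star$) implicit. Your two-step decomposition makes this explicit via post-processing immunity, and additionally flags two points the paper glosses over: that one must fix the adjacency to a single vehicle origin to avoid an $N\epsilon$ blow-up (which is exactly how the paper's leakage formula is set up, but without commentary), and that the passenger locations $\mathbf p$ enter only as public side information. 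These additions buy a cleaner statement about the assignment output itself rather than just the reported positions; the paper's version is terser but relies on the reader supplying the post-processing step.
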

\begin{proof}
Given a query that reports the current set of vehicle positions, the leakage formula can be written as:
\begin{equation}
\mathcal{L} = \mathop{\textrm{sup}}_{i, \mathbf{x}_{v_i}, \mathbf{x}_{v'_i}} \left| \ln \frac{\mathbb{P}(\mathbf{x}_{\tilde{v}_1}, \mdots, \mathbf{x}_{\tilde{v}_N} | \mathbf{x}_{v_1}, \mdots, \mathbf{x}_{v_N})}{\mathbb{P}(\mathbf{x}_{\tilde{v}_1}, \mdots, \mathbf{x}_{\tilde{v}_N} | \mathbf{x}_{v_1}, \mdots, \mathbf{x}_{v'_i}, \mdots, \mathbf{x}_{v_N})} \right|
\end{equation}
where $\mathbf{x}_{v'_i}$ represents an alternative position for vehicle $i$. The numerator refers to the database containing all true positions, while the denominator refers to an adjacent database where the position of a single vehicle has been changed. By the definition of $\epsilon$-geo-indistinguishability (cf. Section~\ref{sec:geo_privacy}) and since all obfuscated positions are independent, we obtain:
\begin{eqnarray}
\mathcal{L} &=& \mathop{\textrm{sup}}_{i, \mathbf{x}_{v_i}, \mathbf{x}_{v'_i}} \left| \ln \frac{\mathbb{P}(\mathbf{x}_{\tilde{v}_i} | {\mathbf{x}}_{v_i})}{\mathbb{P}(\mathbf{x}_{\tilde{v}_i} | \mathbf{x}_{v'_i})} \right| \\
&\leq& \epsilon \mathop{\textrm{sup}}_{i, \mathbf{x}_{v_i}, \mathbf{x}_{v'_i}} \| \mathbf{x}_{v_i} - \mathbf{x}_{v'_i} \|_2 \nonumber 
\end{eqnarray}
\end{proof}

\subsection{Redundant Vehicle Assignment}
\label{sec:multi_vehicle_ass}

\newcommand{\nneg}{\!\!\!\!\!\!\!\!\!\!\!\!\!\!\!\!}
\begin{algorithm}[tb]
\caption{{\small Iterative Hungarian Assignment under Obfuscation}}
\label{alg1}
\begin{algorithmic}[1]
    \STATE $\tilde{c}_{ij} \Leftarrow \eta \displaystyle\sum_{k \in \mathcal{V}} \mathbb{P}_{L}(\mathbf{x}_{\tilde{v}_i} | k, \epsilon) f(k, p_j) ~ \forall i,j$ \label{alg1:line1}
    \STATE $\tilde{\mathbf{A}}^{\star} \Leftarrow \displaystyle\textrm{argmin}_{\mathbf{A}} \sum_{i = 1}^N \sum_{j = 1}^M \tilde{c}_{ij} a_{ij}$ \label{alg1:line2}
    \FOR{$\{2, \ldots, D\}$} \label{alg1:line3}
        \IF{$N < MD$}
            \STATE {\bf break}
        \ENDIF
        \STATE $\mathcal{Z}_j \Leftarrow \{ i | a^{\star}_{ij} = 1 \}$ \label{alg1:line7}
        \STATE %
            \sbox0{$\vcenter{\hbox{$\begin{array}{l}
                +\infty \quad \textrm{if~} \exists j' \textrm{~s.t.~} i \in \mathcal{Z}_{j'} \\
                \eta \!\!\!\! \displaystyle\sum_{\substack{~\\ \mathbf{k} \in \mathcal{V}^{|\mathcal{Z}_j| + 1}}} \!\!\!\! \min_{k \in \mathbf{k}} f(k, p_j) \nneg \prod_{\{k, i'\} \in \mathrm{zip}(\mathbf{k}, \mathcal{Z}_j \cup \{i\})
}\nneg \mathbb{P}_L(\mathbf{x}_{\tilde{v}_{i'}} | k, \epsilon)
            \end{array}$}}$}%
    ${\tilde{c}}_{ij} \Leftarrow \left\{ \vrule width0pt depth \dimexpr\dp0 + .3ex\relax\copy0  
            \kern-\nulldelimiterspace
    \vphantom{\copy0}\right. 
    $  \label{alg1:line8}
        \STATE $\tilde{\mathbf{A}}^{\star} \Leftarrow \displaystyle \tilde{\mathbf{A}}^{\star} + \textrm{argmin}_{\tilde{\mathbf{A}}} \sum_{i = 1}^N \sum_{j = 1}^M \tilde{c}_{ij} a_{ij}$ \label{alg1:line9}
    \ENDFOR
\end{algorithmic}
\end{algorithm}


Vehicle-to-passenger assignments that are based on obfuscated positions will result in degraded performance. Much of this performance loss can be recovered by realizing that, in practice, a large proportion of the vehicle fleet is idle~\footnote{\url{http://www.nyc.gov/html/tlc/downloads/pdf/2014_taxicab_fact_book.pdf}}. Our idea is to assign \emph{redundant} vehicles to each passenger: of all assigned vehicles, only the fastest vehicle will actually pick up the passenger. Consequently, this strategy reduces the expected passenger waiting time (with respect to the non-redundant assignment strategy)~\footnote{The underlying reasoning is that for two random variables $X$ and $Y$ representing passenger pick-up times, we have that $\mathbb{E}[\mathrm{min}(X,Y)] \leq \mathrm{min}(\mathbb{E}[X],\mathbb{E}[Y])$.}.

Algorithm~\ref{alg1} proposes a polynomial-time procedure that assigns $D$ vehicles to each passenger. When $D > 1$, we refer to the assignment as \emph{redundant}. The key component of this algorithm is that it computes the optimal assignment of (several) idle vehicles to each passenger based on a cost matrix that is built incrementally (with each additionally assigned vehicle).  
Lines~\ref{alg1:line1}~and~\ref{alg1:line2} compute the solution to the basic non-redundant assignment, as seen in the previous section. 
At each iteration (starting on line~\ref{alg1:line3}), the procedure adds an additional vehicle to each passenger such that the expected sum of waiting times is minimized. 
On line~\ref{alg1:line7}, the set $\mathcal{Z}_j$ contains the indeces of the currently assigned vehicles for passenger $j$.
Line~\ref{alg1:line8} computes the expected waiting time resulting from assigning an additional vehicle $i$ to each passenger $j$~\footnote{$\mathrm{zip}(\mathcal{A}, \mathcal{B})$ corresponds to the list of pairs obtained by combining elements of $\mathcal{A}$ and $\mathcal{B}$ in the same order (with $|A| = |B|$). E.g., $\mathrm{zip}(\{\{1, 2\}, \{3, 4\}\}) = \{\{1, 3\}, \{2, 4\}\})$.}. 
%
It does so by evaluating the joint probability that all already assigned vehicles and the additional vehicle are located at a given set of nodes $\mathbf{k}$, and multiplying this probability by the minimum waiting time (given by the node in $\mathbf{k}$ closest to the passenger).
Line~\ref{alg1:line9} combines the previous assignment with the newly optimized one. It is worth noting that line~\ref{alg1:line8} can be computed quickly
\emph{(i)} by memorizing the results of the previous iteration for the next,
\emph{(ii)} by ignoring nodes that have a minor impact on the computation of $\tilde{c}_{ij}$ (e.g., nodes $k$ such that $\mathbb{P}_L(\mathbf{x}_{\tilde{v}_i} | k, \epsilon) \leq p_{\min}$ for some arbitrary threshold $p_{\min}$, and
\emph{(iii)} by pre-computing for each node in the graph, this list of relevant nodes (nearest nodes given the latter threshold), and their shortest route lengths to every other node in the graph.
Hence, the overall complexity is bounded by the Hungarian algorithm, and is in the order of $\mathcal{O}((\min(M^2N, N^2M) + MNDs(\mathcal{V}))D)$ where $s(\mathcal{V}) = \max_{k\in\mathcal{V}} |\{l | l \in \mathcal{V} \vee \mathbb{P}_L(l | k, \epsilon) > p_{\min}\}|$ represents the size of largest set of vertices that contribute minimally (as determined by $p_{\min}$) to the computation of the expected waiting times. For example, setting $p_{\min} = 10^{-6}$ with $\epsilon = 0.02$ results in $s(\mathcal{V}) = 30$ on the Manhattan graph.

\begin{figure}[tb]
\centering
\psfrag{t}[cc][][0.8]{Time [s]}
\psfrag{f}[cc][][0.8][90]{Frequency}
\subfigure[{\footnotesize Optimal\,($\epsilon \!\!\!\rightarrow \!\!\infty$)}]{\includegraphics[width=0.26\columnwidth]{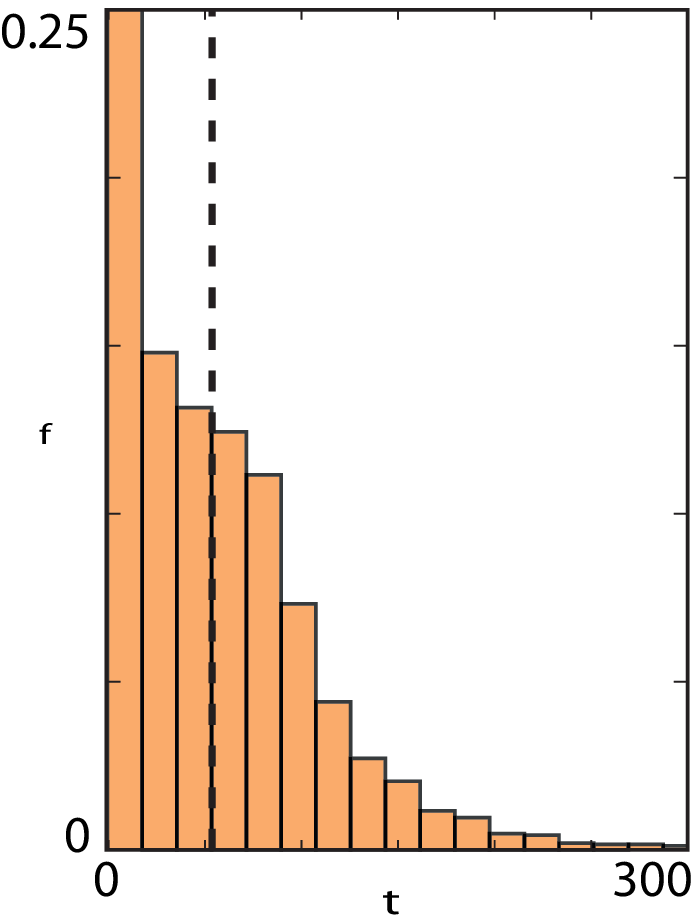}}\hspace{0.1cm}
\subfigure[$\epsilon=0.02$]{\includegraphics[width=0.26\columnwidth]{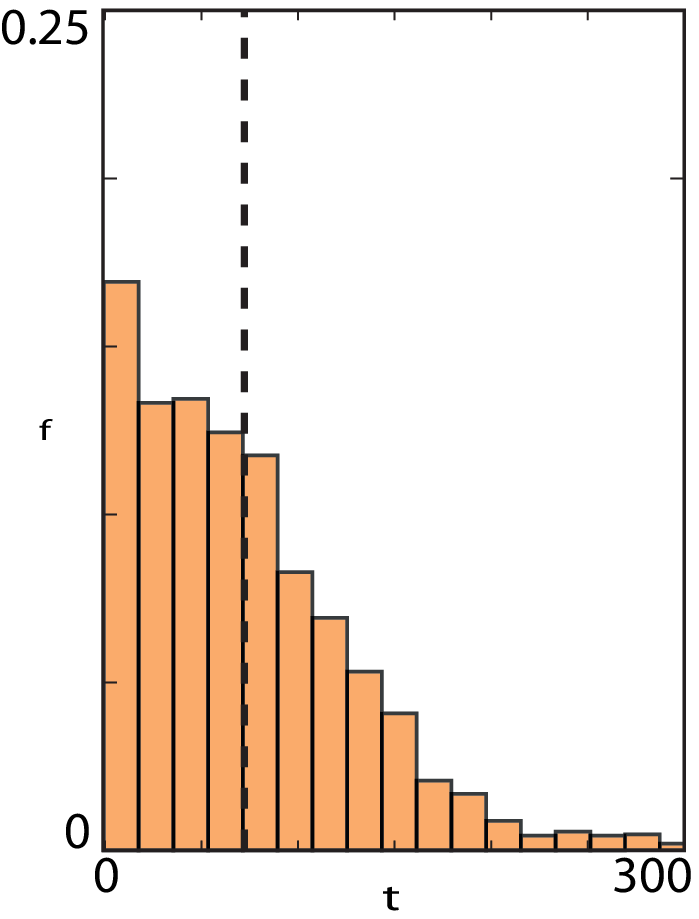}}\hspace{0.1cm}
\subfigure[$\epsilon=0.01$]{\includegraphics[width=0.26\columnwidth]{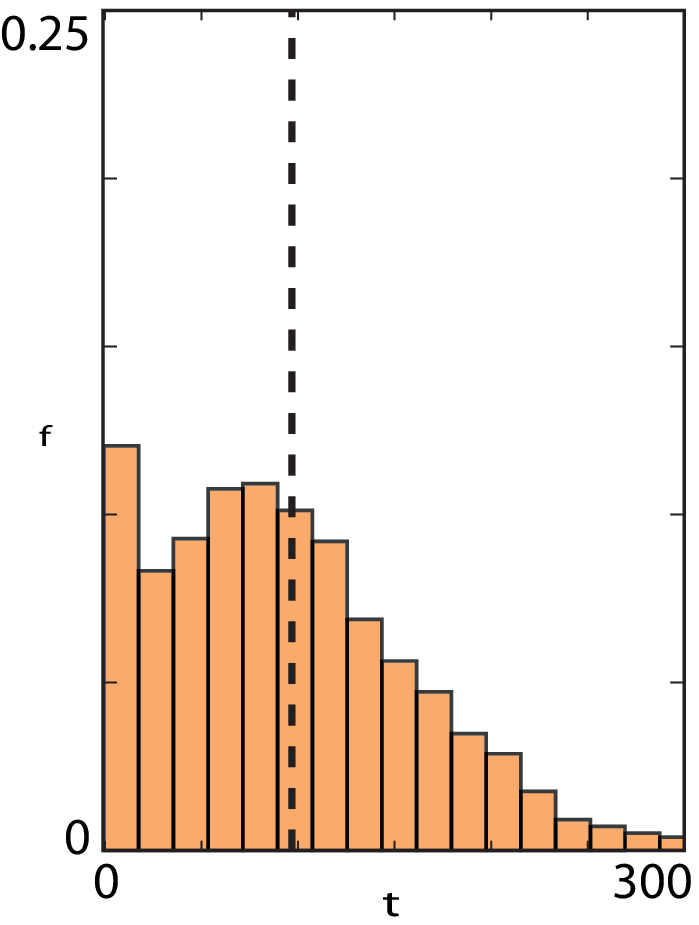}}
\caption{Passenger waiting times for batch vehicle-to-passenger assignment, for 500 vehicles and 250 passengers. Passenger and vehicle locations are sampled according to the actual distribution of pick-up and drop-off locations, respectively, over the month of June 2016. The dashed line shows the mean. (a) Optimal assignment strategy. (b) Private assignment using geo-indistinguishable vehicle origins, with $\epsilon=0.02$. (c) Private assignment using geo-indistinguishable vehicle origins, with $\epsilon = 0.01$.
\label{fig:batch_histogram}}
\end{figure}

\subsection{Performance}
The following results are based on the dataset and graph described in Section~\ref{sec:data}, and show the performance of the batch assignment strategy for varying levels of noise, and a varying number of available vehicles.
Fig.~\ref{fig:batch_histogram} shows passenger waiting times for 500 vehicles and 250 passengers, obtained after non-redundant single-vehicle assignments. Passenger and vehicle locations are sampled according to the actual distribution of pick-up and drop-off locations, respectively, as recorded over the month of June 2016. Using an optimal (noise-free) assignment algorithm, the mean waiting time is just under 1 minute. We observe that as the noise level increases, the distribution shifts, resulting in higher mean waiting times.

Fig.~\ref{fig:batch_waiting_time} shows the performance of the batch assignment algorithm, as a function of the Laplace inverse scale parameter $\epsilon$, for a fixed number of 250 passengers. We consider non-redundant as well as redundant assignments (the number of vehicles assigned per passenger is denoted by $D$).
For all panels, the left side shows the average waiting time, and the right side shows the degradation in performance between the  private (suboptimal) assignment and the non-private (optimal) assignment (also shown by a dashed line on the left panel).
Figures~\subref{fig:batch_waiting_time_a}, and \subref{fig:batch_waiting_time_c} use 250, and 1000 vehicles respectively. 
As expected, the mean waiting time decreases as the number of available vehicles increases. Consequently, as the proportion of available vehicles to passenger increases, the performance of the private assignment strategy deviates more strongly from the optimal performance (since the noise is constant, and the vehicle density increases).

\newcommand{\fccc}{0.93}
\begin{figure}[tb]
\centering
\psfrag{w}[cc][][0.6][90]{Waiting Time [s]}
\psfrag{i}[cc][][0.6][90]{Waiting Time Increase}
\psfrag{e}[cc][][0.9]{$\epsilon$}
\psfrag{1}[lc][][0.6]{$D=1$}
\psfrag{2}[lc][][0.6]{$D=2$}
\psfrag{3}[lc][][0.6]{$D=3$}
\psfrag{4}[lc][][0.6]{$D=4$}
\subfigure[250 vehicles]{\includegraphics[width=\fccc\columnwidth]{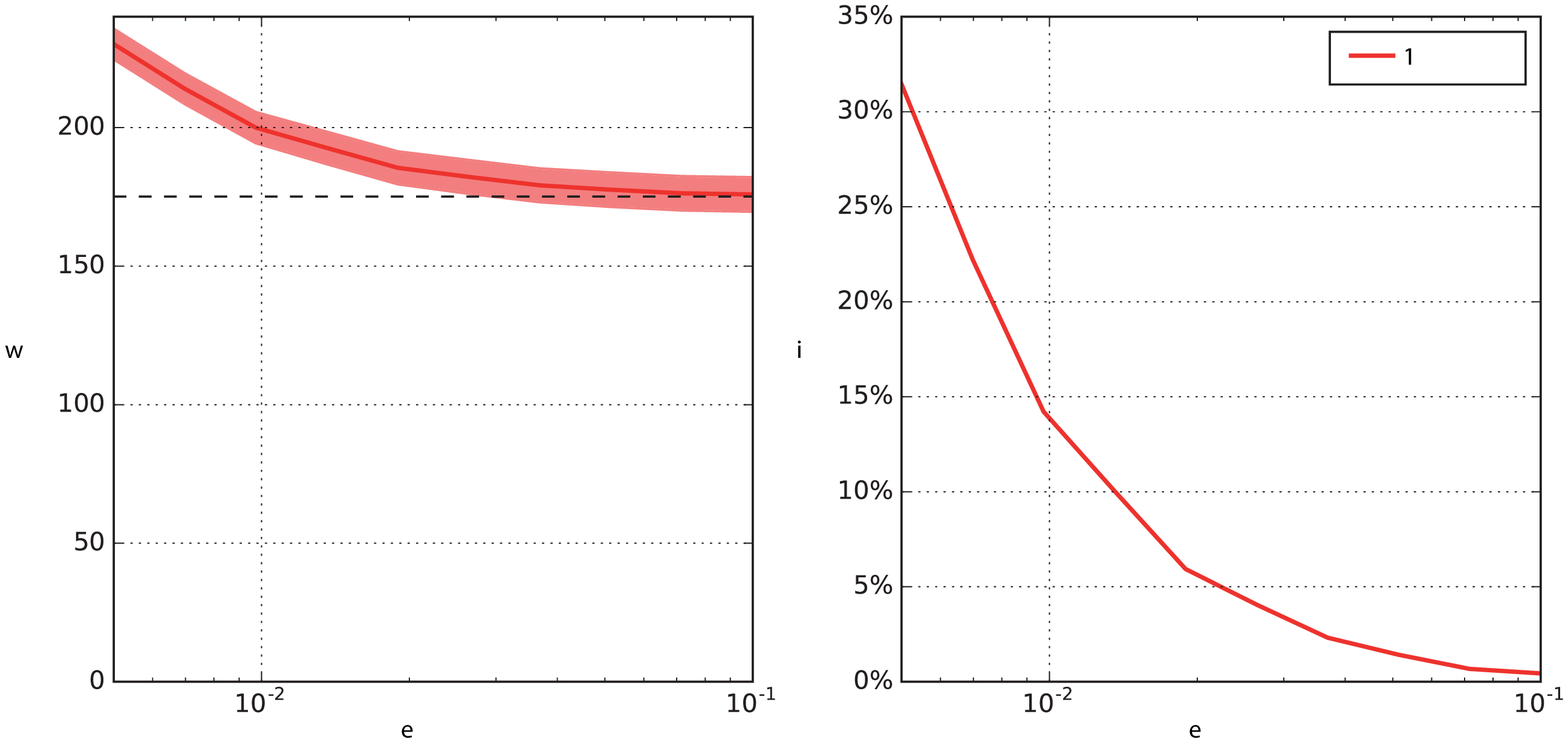}\label{fig:batch_waiting_time_a}}\\
\subfigure[1000 vehicles]{\includegraphics[width=\fccc\columnwidth]{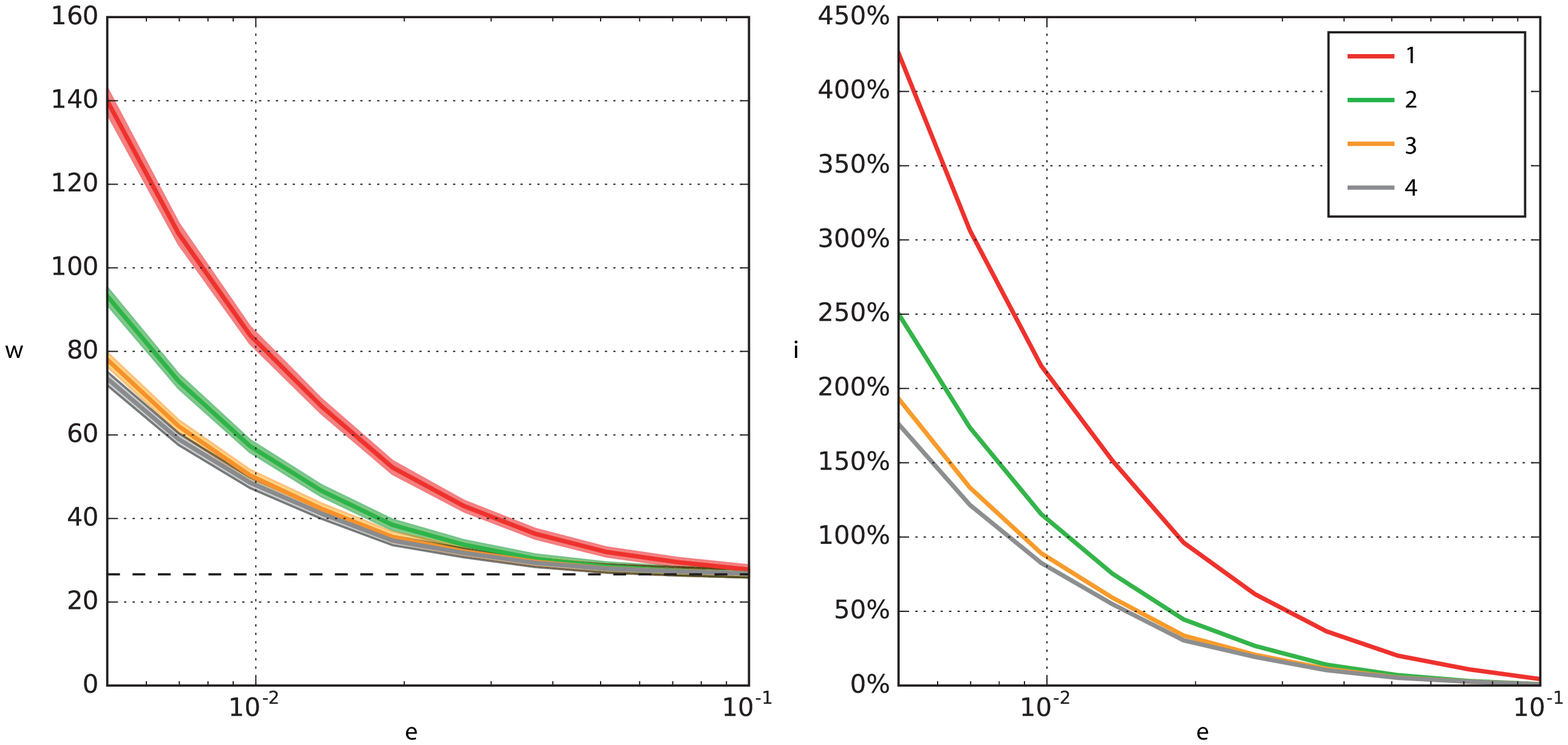}\label{fig:batch_waiting_time_c}}
\caption{Performance of the batch assignment algorithm for 250 passengers and a varying number of vehicles. Passenger and vehicle locations are sampled according to the actual distribution of pick-up and drop-off locations, respectively, over the month of June 2016. The left panels show the mean waiting time, as a function of the Laplacian inverse scale parameter $\epsilon$. The dashed line shows the optimal performance. The shaded areas represent a 95\% confidence interval. The right panels show the percentage of waiting time increase, with respect to the optimal assignment strategy. The value $D$ is the number of assigned vehicles per passenger, as elaborated in Sec.~\ref{sec:multi_vehicle_ass}
\label{fig:batch_waiting_time}}
\end{figure}

\begin{figure}[t]
\centering
\psfrag{i}[cc][][0.7]{$1$}
\psfrag{j}[cc][][0.7]{$2$}
\psfrag{k}[cc][][0.7]{$1,2$}
\psfrag{d}[cc][][0.8]{$\widetilde{t}_{1,\textrm{d}}$}
\psfrag{a}[cc][][0.8]{$t_{1,\textrm{p}}$}
\psfrag{c}[cc][][0.8]{$\widetilde{t}_{1,\textrm{p}}$}
\psfrag{b}[cc][][0.8]{$t_{1,\textrm{d}}$}
\psfrag{z}[cc][][0.8]{$\widetilde{t}_{1,\textrm{d}} = \widetilde{t}_{2,\textrm{d}}$}
\psfrag{e}[cc][][0.8]{$\widetilde{t}_{2,\textrm{p}}$}
\psfrag{f}[cc][][0.8]{$t_{2,\textrm{p}} + t_{2,\textrm{d}}$}
\subfigure[Operation diagram for continuous non-redundant assignment. Once assigned to a passenger, a vehicle will become available in $t=t_{\textrm{p}} + t_{\textrm{d}}$ seconds. However, to preserve the privacy mechanism, it will report its availability after $\tilde{t}=\widetilde{t}_{\textrm{p}} + \widetilde{t}_{\textrm{d}}$ seconds (irrespective of whether that time arrives earlier or later than its true availability).]{\includegraphics[width=.8\columnwidth]{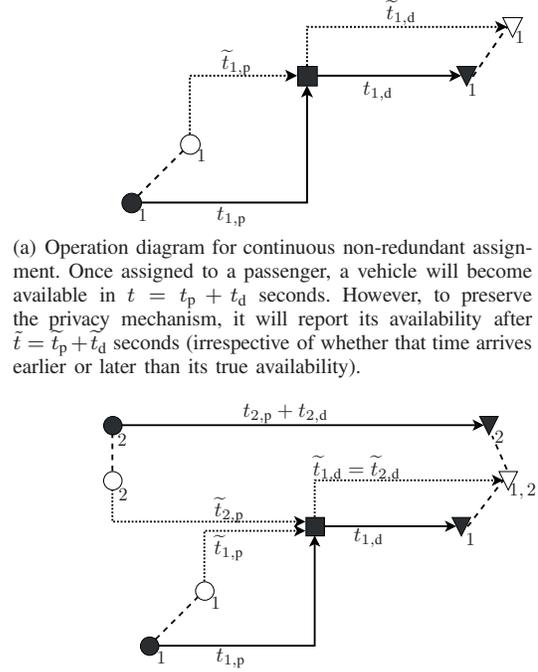}\label{fig:continuous_nonred}} \\
\subfigure[Operation diagram for continuous redundant assignment. Once assigned to a passenger, vehicles $1$ and $2$ mutually agree upon which vehicle will effectively pick up the passenger (in this case, vehicle $1$). 
After $\tilde{t}_1 = \widetilde{t}_{1,\textrm{p}} + \widetilde{t}_{1,\textrm{d}}$ seconds, the selected vehicle drops the passenger off at the request site $\btriangle_1$), but reports an obfuscated drop-off location $\wtriangle_1$. 
The other vehicle pretends to pick up the passenger by moving to $\btriangle_2$ and reporting the same drop-off as vehicle $1$ at $\wtriangle_1 = \wtriangle_2$ after $\tilde{t}_2 = \widetilde{t}_{2,\textrm{p}} + \widetilde{t}_{2,\textrm{d}}$ seconds.]{\includegraphics[width=.8\columnwidth]{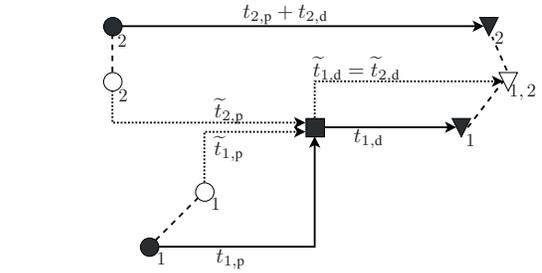}\label{fig:continuous_red}} 
\caption{Continuous assignment scheme. True positions are denoted by filled black symbols. Reported (obfuscated) positions are denoted by empty white symbols. The previous true and reported drop-off locations are shown with circles (i.e., $\bcircle$, $\wcircle$), the pick-up location is shown with a square (i.e., $\bsquare$) and the new true and reported drop-off locations are shown with triangles (i.e., $\btriangle$, $\wtriangle$, respectively). Note that all vehicles sample new offsets from the their true and reported positions according to a planar Laplace distribution with inverse-scale parameter $\epsilon$.
\label{fig:schema_continuous}}
\end{figure}

\section{Continuous Vehicle Routing under Privacy}
\label{sec:continuous_vrp}

In practice, after a vehicle has dropped off its passenger, it becomes available again for another assignment batch. We refer to consecutive assignments of the same vehicle to consecutive passengers as \emph{continuous} vehicle routing.
In contrast to batch vehicle routing, continuous vehicle routing poses the additional challenge of ensuring that the obfuscated drop-off locations are reported at \emph{times} that correspond to the travel distances between reported (obfuscated) locations. In other words, a vehicle effectively reports its availability at a moment in time that is either \emph{before} or \emph{after} it truly drops off its passenger, since reporting its availability at the true moment would compromise the privacy of the drop-off location.
In the following, we demonstrate that the continuous vehicle assignment strategy respects promised privacy guarantees. We elaborate the strategy for non-redundant as well as redundant assignments.

\subsection{Continuous Non-Redundant Vehicle Assignment}
The procedure according to which a vehicle is routed to a passenger in the private continuous assignment scheme is as follows. At the start, the vehicle communicates with the operator to report its obfuscated position, and to receive its next passenger assignment. Once this assignment is known, the vehicle directly communicates with the passenger (by-passing the central operator) to obtain the true passenger destination. Based on this information, the vehicle computes an obfuscated drop-off location (by adding planar Laplace noise) and the moment in time when this fictitious location will be reached (i.e., when the vehicle availability must be reported). Figure~\ref{fig:continuous_nonred} illustrates this procedure on a two-dimensional workspace, showing the offset produced by the privacy mechanism, and its effect on the travel time.
Since the vehicle availability, as reported to the operator, might happen before the actual vehicle availability, it is important that vehicles do not keep an ever increasing backlog of passenger requests. However, since obfuscated drop-off positions are sampled from unbiased probability distributions, there is no bias towards reporting availability sooner rather than later, and the backlog effect does not happen in practice.

For clarity, the following formulations use the symbols defined in the Fig.~\ref{fig:schema_continuous}, where $\bcircle$ and $\wcircle$ represent real and obfuscated origin locations, and where $\btriangle$ and $\wtriangle$ represent real and obfuscated drop-off locations, respectively.

\begin{proposition}[$\epsilon$-geo-indistinguishable continuous non-redundant assignment] \label{prop:2}
The continuous non-redundant assignment where each vehicle $i$ reports an obfuscated position $\wcircle_i$ drawn from a planar Laplace distribution is $\epsilon$-geo-indistinguishable with respect to the true positions $\bcircle_i$ iff, at each drop-off, each vehicle draws a new obfuscated position $\wtriangle_i$ from a planar Laplace distribution centered around the true drop-off $\btriangle_i$ (we assume that each passenger makes a single ride).
\end{proposition}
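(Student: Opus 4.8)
The plan is to reduce the continuous setting to the static one: I will argue that everything the adversary ever sees is a deterministic function of one independent planar-Laplace draw per true drop-off, and then invoke the per-location guarantee of the planar Laplace mechanism exactly as in the proof of Proposition~\ref{prop:1}.

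\textbf{Setup and reduction.} Enumerate the successive true drop-off locations of vehicle $i$ as $\btriangle_i^{(1)}, \btriangle_i^{(2)}, \mdots$; by construction $\btriangle_i^{(t)}$ is also the true origin $\bcircle_i^{(t+1)}$ for the next epoch. Under the scheme in the statement, at its $t$-th drop-off the vehicle draws a fresh $\wtriangle_i^{(t)} \sim L(\btriangle_i^{(t)},\epsilon)$, independently over $i$ and $t$, and reports this same point as its obfuscated origin $\wcircle_i^{(t+1)}$. The adversary observes the entire transcript: every reported obfuscated position, and every reported availability time $\widetilde{t}_i^{(t)} = \widetilde{t}_{i,\mathrm{p}}^{(t)} + \widetilde{t}_{i,\mathrm{d}}^{(t)}$. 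The first claim is that, conditioned on the passenger requests (which are exogenous, and in the worst case known to the adversary), this transcript is a \emph{deterministic} function of the family $\{\wtriangle_i^{(t)}\}_{i,t}$ alone: the operator's assignment at each epoch is computed from reported obfuscated positions and requests only (destinations bypass the operator), and $\widetilde{t}_i^{(t)}$ is exactly the shortest-path time from $\wcircle_i^{(t)}$ to the assigned passenger plus the shortest-path time from that passenger to $\wtriangle_i^{(t)}$ --- so the reported times and matchings add nothing once the obfuscated points are given.

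\textbf{Invariance under a single-entry change.} Two databases are adjacent when one true drop-off is changed; say vehicle $i$'s $\tau$-th drop-off is moved from $\btriangle_i^{(\tau)}$ to a nearby point $\btriangle^{\star}$ (the single-ride assumption is what lets us equate this with altering exactly one passenger's unique destination, i.e.\ one entry). I claim this changes only the \emph{distribution of} $\wtriangle_i^{(\tau)}$ and nothing else in the law of the transcript. Indeed, the destination is revealed only to the vehicle, so the epoch-$\tau$ matching is unchanged; and for every later epoch the obfuscated origin vehicle $i$ reports is $\wtriangle_i^{(\tau)}$ --- not $\btriangle_i^{(\tau)}$ --- so the subsequent matchings, the passenger later assigned to $i$ (hence $\btriangle_i^{(\tau+1)}$, and inductively all later true drop-offs), and all reported times are functions of obfuscated and exogenous data only. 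The only thing that genuinely changes downstream is the vehicle's physical path between its $\tau$-th and $(\tau{+}1)$-th pick-ups, which is never reported. This ``firewall'' --- all subsequent decisions run on obfuscated data --- is the conceptual heart of the argument, and the step I expect to require the most care, since one naively fears a cascade through the system dynamics.

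\textbf{Conclusion, and necessity.} Combining the two claims: the law of the transcript is the image, under a single fixed deterministic map, of $\prod_{i,t}\mathbb{P}_L(\cdot \mid \btriangle_i^{(t)},\epsilon)$, and passing to the adjacent database replaces the single factor $\mathbb{P}_L(\cdot\mid\btriangle_i^{(\tau)},\epsilon)$ by $\mathbb{P}_L(\cdot\mid\btriangle^{\star},\epsilon)$. By the post-processing closure of geo-indistinguishability~\cite{Andres:2013geo},
\[
\mathcal{L} \;\le\; \sup_{\wtriangle}\;\left|\ln\frac{\mathbb{P}_L(\wtriangle\mid\btriangle_i^{(\tau)},\epsilon)}{\mathbb{P}_L(\wtriangle\mid\btriangle^{\star},\epsilon)}\right| \;\le\; \epsilon\,\big\|\btriangle_i^{(\tau)}-\btriangle^{\star}\big\|_2,
\]
the last inequality being the defining property of the planar Laplace mechanism recalled in Section~\ref{sec:geo_privacy}; taking the supremum over $i$, $\tau$ and the two locations gives $\epsilon$-geo-indistinguishability. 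For the converse (``only if'') I would argue by counterexample that each ingredient is necessary. If the vehicle reported availability at its \emph{true} drop-off time, the transcript would reveal the true shortest-path time from the (known) passenger to the true drop-off, and two candidate destinations at different such distances yield mutually inconsistent transcripts, so $\mathcal{L}=\infty$. If the vehicle reused a fixed offset instead of resampling, consecutive reports would expose the exact displacement between successive true drop-offs, so a single-entry change is detected with certainty and again $\mathcal{L}=\infty$. Hence drawing a fresh planar-Laplace sample centered at each true drop-off, with the availability time computed from the obfuscated positions, is both sufficient and necessary.
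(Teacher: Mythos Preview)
Your argument is correct and rests on the same key observation as the paper's proof --- namely that the reported availability time $\widetilde{t}$ is fully determined by the \emph{obfuscated} endpoints $\wcircle_i$ and $\wtriangle_i$, so it contributes nothing to the likelihood ratio once those are conditioned on. The packaging, however, is different. The paper works with a single snapshot (the ``last'' reported positions of all vehicles together with one reported time), writes out the joint density explicitly, factors it under an assumption of independence between pick-up and drop-off locations, and cancels everything except the single planar-Laplace ratio for $\wtriangle_i$. You instead treat the \emph{entire} multi-epoch transcript, show it is a deterministic post-processing of the independent family $\{\wtriangle_i^{(t)}\}$, and then invoke the post-processing closure of geo-indistinguishability. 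Your route is slightly more abstract but also more robust: it makes the ``firewall'' explicit (downstream assignments and times depend only on obfuscated data, so no cascade), and it avoids the paper's extra assumption that pick-up and drop-off locations are independent --- you sidestep this by conditioning on the exogenous request stream. Finally, you supply a necessity argument for the ``iff'' that the paper's proof simply omits; your two counterexamples (true-time reporting and offset reuse) are the right ones, and they fill a genuine gap in the paper's treatment of its own statement.
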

\begin{proof}
The proof follows a similar structure to the one of Proposition~\ref{prop:1}. Since the query only returns the last obfuscated position of all vehicles (and a passenger takes a single ride), we have:
\begin{equation}
\mathcal{L} = \!\mathop{\textrm{sup}} \left| \ln \frac{\mathbb{P}(\wcircle_1, \mdots, \wcircle_i, \wtriangle_i, \mdots, \wcircle_N, \widetilde{t} | \bcircle_1, \mdots, \bcircle_i, \btriangle_i, \mdots, \bcircle_N)}{\mathbb{P}(\wcircle_1, \mdots, \wcircle_i, \wtriangle_i, \mdots, \wcircle_N, \widetilde{t} | \bcircle_1, \mdots, \bcircle_i, \btriangle'_i, \mdots, \bcircle_N)} \right|
\end{equation}
where $\btriangle_i$ refers to the latest drop-off location of vehicle $i$, and $\btriangle'_i$ refers to an alternative drop-off location. The duration $\widetilde{t}$ (known to the operator) refers to the reported duration of the latest ride of vehicle $i$. As shown in Figure~\ref{fig:continuous_nonred}, this duration is fully determined by the previously and currently reported drop-off locations. Hence, if we assume independence of pick-up and drop-off locations, we obtain:
\begin{eqnarray}
\mathcal{L} &=& \mathop{\textrm{sup}} \left| \ln \frac{\mathbb{P}(\wcircle_i, \wtriangle_i, \widetilde{t} | \bcircle_i, \btriangle_i)}{\mathbb{P}(\wcircle_i, \wtriangle_i, \widetilde{t} | \bcircle_i, \btriangle'_i)} \right| \\
&=& \mathop{\textrm{sup}} \left| \ln \frac{\mathbb{P}(\widetilde{t} | \wcircle_i, \wtriangle_i) \mathbb{P}(\wcircle_i | \bcircle_i) \mathbb{P}(\wtriangle_i | \btriangle_i)}{\mathbb{P}(\widetilde{t} | \wcircle_i, \wtriangle_i) \mathbb{P}(\wcircle_i | \bcircle_i) \mathbb{P}(\wtriangle_i | \btriangle'_i)} \right| \nonumber \\
&=& \mathop{\textrm{sup}} \left| \ln \frac{\mathbb{P}(\wtriangle_i | \btriangle_i)}{\mathbb{P}(\wtriangle_i | \btriangle'_i)} \right| ~ \leq \epsilon \cdot \mathop{\textrm{sup}} \| \btriangle_i - \btriangle'_i \|_2 \nonumber
\end{eqnarray}
\end{proof}

\begin{remark}\label{rem:epsilon_tune}
This proof assumes the independence of pick-up and drop-off locations. In reality, it is often possible to correlate the pick-up and drop-off locations given the time of day. As a result, it may be necessary to vary the level of obfuscation throughout the day by tuning $\epsilon$ as a function of the pick-up location. 
\end{remark}
\begin{remark}\label{rem:epsilon_corr}
Passengers who take $n$ subsequent rides only benefit from an $n\epsilon$-geo-indistinguishable drop-off (since obfuscated positions are independent from each other). In practice, this leakage can be reduced by correlating subsequent obfuscated positions that relate to a given same passenger. 
\end{remark}

\subsection{Continuous Redundant Vehicle Assignment}
Much like the redundant batch assignment strategy, continuous assignment can also be implemented with a redundant number of vehicles per passenger. This procedure is schematized in Figure~\ref{fig:continuous_red}. In contrast to
Figure~\ref{fig:continuous_nonred}, Figure~\ref{fig:continuous_red} shows two vehicles that are assigned to pick up a passenger. The vehicles will mutually agree upon which one will truly pick up the passenger (i.e., the one that is truly closer). The selected vehicle computes an obfuscated drop-off location, and communicates this value to the redundant vehicle, which uses it to compute its itinerary (to a fake drop-off location). At the end of the respective travel times, both vehicles report their availability as well as the same obfuscated drop-off location (this operation is not synchronized).

\begin{proposition}[$\epsilon$-geo-indistinguishable continuous redundant assignment]
The continuous redundant assignment where each vehicle $i$ reports an obfuscated position $\wcircle_i$ drawn from a planar Laplace distribution is $\epsilon$-geo-indistinguishable with respect to the true positions $\bcircle_i$ iff vehicles that are assigned to the same passenger report the same drop-off location. At each drop-off, each vehicle draws a new obfuscated position (we assume that each passenger makes a single ride).
\end{proposition}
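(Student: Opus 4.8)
The plan is to follow the template of Proposition~\ref{prop:2}, but to keep track of the fact that a single passenger's true drop-off now influences \emph{several} reported quantities simultaneously. Fix the passenger whose drop-off I perturb from $\btriangle$ to $\btriangle'$, and let $i_1,\ldots,i_D$ be the vehicles assigned to that passenger, with $i_1$ the one that truly performs the pick-up and drop-off. The operator's query returns, for every vehicle, its last reported position together with the reported duration of its latest ride. Since changing $\btriangle$ to $\btriangle'$ leaves every vehicle outside $\{i_1,\ldots,i_D\}$ untouched, those factors cancel in the leakage ratio at once, and I would begin by writing $\mathcal{L}$ restricted to the joint law of $(\wtriangle_{i_1},\ldots,\wtriangle_{i_D},\widetilde t_{i_1},\ldots,\widetilde t_{i_D})$ under the two adjacent databases.

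The first key step is to exploit the ``same drop-off'' condition. Because all $D$ co-assigned vehicles report the \emph{identical} obfuscated drop-off $\wtriangle := \wtriangle_{i_1}=\cdots=\wtriangle_{i_D}$, drawn once from the planar Laplace distribution centered at the true drop-off, the contribution of the $D$ reported positions to numerator and denominator is the \emph{single} factor $\mathbb{P}_L(\wtriangle \mid \btriangle,\epsilon)$ versus $\mathbb{P}_L(\wtriangle \mid \btriangle',\epsilon)$, not a product of $D$ such factors. The second step disposes of the reported durations: as in Fig.~\ref{fig:schema_continuous}, each $\widetilde t_{i_k}$ is fully determined by vehicle $i_k$'s previously reported origin $\wcircle_{i_k}$, the reported pick-up (request) location, and the common obfuscated drop-off $\wtriangle$ --- the choice of which vehicle actually serves the passenger is made among the vehicles from true distances to the \emph{pick-up} point and is never exposed to the operator. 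Hence $\mathbb{P}(\widetilde t_{i_k}\mid \wcircle_{i_k},\text{pick-up},\wtriangle)$ depends on $\btriangle$ only through $\wtriangle$ and cancels, as do the origin factors $\mathbb{P}(\wcircle_{i_k}\mid\bcircle_{i_k})$ and, under the same independence assumption used in Proposition~\ref{prop:2}, the pick-up factors. What remains is exactly $\mathcal{L} = \sup\bigl|\ln \mathbb{P}_L(\wtriangle\mid\btriangle,\epsilon)/\mathbb{P}_L(\wtriangle\mid\btriangle',\epsilon)\bigr| \le \epsilon\,\sup\|\btriangle-\btriangle'\|_2$, the planar-Laplace geo-indistinguishability bound, which settles the ``if'' direction.

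For the ``only if'' direction I would argue by contradiction: if the $D$ co-assigned vehicles reported independently sampled obfuscations $\wtriangle_{i_1},\ldots,\wtriangle_{i_D}$, then by independence the position contribution to the leakage ratio becomes $\prod_{k=1}^{D}\mathbb{P}_L(\wtriangle_{i_k}\mid\btriangle,\epsilon)/\mathbb{P}_L(\wtriangle_{i_k}\mid\btriangle',\epsilon)$, whose log-magnitude can reach $D\epsilon\,\|\btriangle-\btriangle'\|_2$ --- the same inflation already noted in Remark~\ref{rem:epsilon_corr} --- so any $D>1$ breaks $\epsilon$-geo-indistinguishability, forcing the common-report condition. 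I expect the main obstacle to be the bookkeeping in the second step: arguing rigorously that the reported durations and the internal ``who-picks-up'' decision of the redundant vehicles contribute no drop-off-dependent term beyond $\wtriangle$ itself. This requires making explicit that the fake itineraries of the non-serving vehicles are never revealed to the operator --- only their (common) reported drop-off and their individually (and asynchronously) reported ride durations --- and that the latter are deterministic functions of the reported endpoints, hence conditionally independent of $\btriangle$ given $\wtriangle$.
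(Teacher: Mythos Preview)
Your proposal is correct and follows essentially the same factorization strategy as the paper: the paper illustrates the case $D=2$, writes the leakage for $(\wcircle_i,\wcircle_j,\wtriangle_{i,j},\widetilde t_i,\widetilde t_j)$, factors it, cancels the duration and origin terms exactly as you describe, and reduces to the single ratio $\mathbb{P}(\wtriangle_{i,j}\mid\btriangle_i)/\mathbb{P}(\wtriangle_{i,j}\mid\btriangle'_i)\le e^{\epsilon\|\btriangle_i-\btriangle'_i\|_2}$. You actually go a bit further than the paper by handling general $D$ and by supplying the ``only if'' direction via the $D\epsilon$ inflation argument --- the paper's own proof omits this direction despite the ``iff'' in the statement.
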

\begin{proof}
We illustrate the proof for a two-vehicle assignment where the vehicle $i$ picks up the passenger and drops her off at position $\btriangle_i$ (refer to Figure~\ref{fig:continuous_red}). Similarly to the previous proof, we obtain:

{\small
\begin{eqnarray}
&\mathcal{L}& = \mathop{\textrm{sup}} \left| \ln \frac{\mathbb{P}(\wcircle_i, \wcircle_j, \wtriangle_{i,j}, \widetilde{t}_i, \widetilde{t}_j | \bcircle_i, \bcircle_j, \btriangle_i)}{\mathbb{P}(\wcircle_i, \wcircle_j, \wtriangle_{i,j}, \widetilde{t}_i, \widetilde{t}_j | \bcircle_i, \bcircle_j, \btriangle'_i)} \right|   \\
&=& \!\!\!\!\! \mathop{\textrm{sup}} \left| \ln \frac{
    \mathbb{P}(\widetilde{t}_i | \wcircle_i, \wtriangle_{i,j})
    \mathbb{P}(\widetilde{t}_j | \wcircle_j, \wtriangle_{i,j})
    \mathbb{P}(\wcircle_i | \bcircle_i)
    \mathbb{P}(\wcircle_j | \bcircle_j)
    \mathbb{P}(\wtriangle_{i,j} | \btriangle_i)
  }{
    \mathbb{P}(\widetilde{t}_i | \wcircle_i, \wtriangle_{i,j})
    \mathbb{P}(\widetilde{t}_j | \wcircle_j, \wtriangle_{i,j})
    \mathbb{P}(\wcircle_i | \bcircle_i)
    \mathbb{P}(\wcircle_j | \bcircle_j)
    \mathbb{P}(\wtriangle_{i,j} | \btriangle'_i)
  } \right| \nonumber \\
&=& \!\!\!\!\! \mathop{\textrm{sup}} \left| \ln \frac{\mathbb{P}(\wtriangle_{i,j} | \btriangle_i)}{\mathbb{P}(\wtriangle_{i,j} | \btriangle'_i)} \right| \nonumber ~\leq \epsilon \cdot \mathop{\textrm{sup}} \| \btriangle_i - \btriangle'_i \|_j
\end{eqnarray}
}
The same holds if vehicle $j$ picks the passenger up.
\end{proof}

\subsection{Performance}
\label{sec:results_continuous}
The following results are based on the dataset and graph described in Section~\ref{sec:data}. We show the performance of the continuous assignment strategy (for a constant privacy level, given by $\epsilon=0.02$) applied to the data recorded during the 24 hours of Friday 1st, June 2016. 
The total vehicle fleet size is variable throughout time --- since the taxicab dataset contains records of occupied vehicles only, we make use of a heuristic to compute the total number of vehicles in the fleet (occupied plus available vehicles). This value is computed from the number of occupied rides obtained from the real taxi data, and is multiplied by 1.56 (which corresponds to a ratio of 64\% of occupied taxis~\footnote{\url{http://www.nyc.gov/html/tlc/downloads/pdf/2014_taxicab_fact_book.pdf}}). We cap the maximum fleet size at 6000 vehicles.
At each time step of our simulation, we ensure that the vehicle fleet size corresponds to the precomputed vehicle fleet size. When necessary, we add vehicles to the system --- these vehicles are placed at random locations that correspond to the distribution of drop-off locations derived from the real dataset. We use the real recorded passenger pick-up times to represent the times when vehicles are requested.
Requests are batched into 20\,s intervals. If requests are not serviced in the current batch, they roll over to the next one, and those that are not serviced within 20\,min are dropped (all schemes exhibit a drop rate below 0.01\%). Finally, we solve the assignment problem for each batch of requests. Vehicles are continuously routed according to our strategies, and they remain in the system unless removed (when the precomputed fleet size reduces).

Figure~\ref{fig:cont_waiting_times} shows the results in the form of violin plots that represent the distribution of the underlying data (i.e., one data point corresponds to the waiting time of a passenger until pick-up).
We show the performance of the continuous non-redundant and redundant assignment schemes, and, to benchmark our results, we also show the performance of the optimal (non-private) assignment algorithm. In the redundant scheme, for each batch, we choose the number of redundant assignments $D$ such there is at least 1 vehicle per passenger, and such that a sufficient number of vehicles is left unassigned (to account for the next batch of requests). The private non-redundant scheme is 30\% worse than the optimal assignment scheme with an average waiting time increase of 53\,s; the redundant scheme improves over the non-redundant scheme and is only 6\% worse that the optimal assignment scheme, with an average waiting time increase of 11\,s.

\begin{figure}[tb]
\centering
\psfrag{a}[cc][][0.7]{Non-Private}
\psfrag{e}[cc][][0.7]{Private Redundant}
\psfrag{c}[cc][][0.7]{Private Non-Red.}
\psfrag{t}[cc][][0.75][90]{Waiting Time [s]}
{\includegraphics[width=0.78\columnwidth]{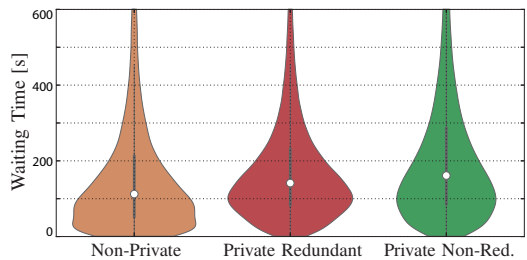}}
\caption{Performance of our assignment algorithms applied to the taxicab data recorded during the 24 hours of Friday 1st, June 2016, and processed as described in Sec.~\ref{sec:results_continuous}. The violin plots feature a kernel density estimation of the underlying data distribution, and are obtained using the Scott reference rule to compute the kernel bandwidth. The plots also show the median, the 25th and 75th percentiles. The mean and standard deviation for the non-private scheme are: $180.937 \pm 247.619$\,[s]; for the private redundant scheme: $191.381 \pm 194.309$\,[s]; and for the private non-redundant scheme: $233.929 \pm 263.267$\,[s]. 
\label{fig:cont_waiting_times}}
\end{figure}



\section{Conclusion} 
\label{sec:conclusion}
This work is situated in the context of centrally operated MoD systems, and considers the problem of assigning vehicles to passengers such that the mean waiting time is minimized. In specific, we provide a method that solves the assignment problem under obfuscated vehicle origin positions, such that the destinations of previously dropped-off passengers remains private.
Our main contributions are two-fold. First, we formalized the notion of privacy for continuous vehicle-to-passenger assignment by building on the concept of geo-indistinguishability. Second, to minimize performance loss, we presented an algorithm that takes advantage of superfluous vehicles in the system, combining multiple iterations of the Hungarian algorithm to allocate a redundant number of vehicles to a single passenger. We evaluated our method on a real, large-scale dataset consisting of over 11 million taxi rides (specifying vehicle availability and passenger requests).
Our results show that our privacy-preserving redundant assignment strategy successfully minimizes the loss of quality-of-service, as measured by average passenger waiting times.
Our work demonstrates that privacy can be integrated into MoD systems without incurring a significant loss of performance, and moreover, that this loss can be further minimized at the cost of deploying additional (redundant) vehicles into the fleet. 
Also, our results indicate that there is a trade-space between privacy and waiting time, and that this trade-off can be tuned as a function of system and/or user preferences. 

Future work will consider the integration obfuscated passenger pick-up locations (which can be readily obtained from the existing framework). We will also consider the tuning of individualized privacy levels as a function of user behavior (as addressed in Remarks~\ref{rem:epsilon_tune} and~\ref{rem:epsilon_corr}), or as a function of heterogeneous user preferences (e.g., reduction of waiting time at the cost of a loss of privacy).






\bibliographystyle{abbrvnat}
{\footnotesize
\bibliography{Bibliography}
}
\end{document}